\documentclass[11pt,twoside]{article}

\usepackage{graphicx}
\usepackage{amsmath}
\usepackage{amssymb}
\usepackage{amsfonts}
\usepackage{subfig,wrapfig}


\newcommand{\term}[1]{\emph{#1}}











%
%

%
%



\newcommand{\norm}[1]{\lVert#1\rVert}


\newtheorem{theorem}{Theorem}
\newtheorem{lemma}[theorem]{Lemma}
\newtheorem{corollary}[theorem]{Corollary}

\newenvironment{proof}{\QuadSpace\par\noindent{\bf Proof}:}{\EndProof\HalfSpace}

\newenvironment{notation}{\QuadSpace\par\noindent{\bf Notation}:}{\HalfSpace}

\textwidth=6in
\oddsidemargin=0.25in
\evensidemargin=0.25in
\topmargin=-0.1in
\footskip=0.8in
\textheight=8.00in

\newcommand{\QuadSpace}{\vspace{0.25\baselineskip}}
\newcommand{\HalfSpace}{\vspace{0.5\baselineskip}}

\newcommand{\EndProof}{ \hfill \vrule width 1ex height 1ex depth 0pt }

\begin{document}

\title{A Locked Orthogonal Tree}
\author{
	David Charlton \\ Boston University \\ charlton@cs.bu.edu \and
	Erik D. Demaine \\ MIT \\ edemaine@mit.edu \and
	Martin L. Demaine \\ MIT \\ mdemaine@mit.edu \and
	Gregory Price \\ MIT \\ price@mit.edu \and
	Yaa-Lirng Tu \\ MIT \\ lingding@mit.edu
}

\maketitle

\begin{abstract}
We give a counterexample to a conjecture of Poon \cite{poon} that any orthogonal tree in two dimensions can always be flattened by a continuous motion that preserves edge lengths and avoids self-intersection. We show our example is locked by extending results on strongly locked self-touching linkages due to Connelly, Demaine and Rote \cite{cdr2} to allow zero-length edges as defined in \cite{adg}, which may be of independent interest. Our results also yield a locked tree with only eleven edges, which is the smallest known example of a locked tree.
\end{abstract}

\section{Introduction}

Connelly, Demaine and Rote \cite{cdr1} proved the ``Carpenter's Rule Theorem'': any chain or polygon in two dimensions can be convexified by a motion that preserves edge lengths and avoids self-intersection. In contrast, however, no such theorem applies to trees: Biedl et al.\ \cite{biedl} showed that there are so-called ``locked'' trees that cannot reach a canonical configuration via such a motion. (We say such trees cannot be \term{flattened}; see Figure~\ref{fig:Flatten}). This result was improved in \cite{cdr2} to show that even a single degree-3 vertex allows an otherwise chain-like graph to lock, so that (by one measure, at least) the Carpenter's Rule Theorem is optimal.

Subsequent work focused on what kinds of trees can lock. Poon \cite{poon2} showed a locked tree of diameter 4, which is minimal, since trees of diameter 3 cannot lock. In the opposite direction, Poon \cite{poon} showed that all lattice trees (that is, trees with unit-length, orthogonal edges) can be flattened, as can certain classes of diameter-4 trees. Because all locked trees discovered by that point were based on what one might call ``angular constraints,'' that is, on packing together subtrees that required local angular expansion in order to flatten, he went further, conjecturing that all orthogonal trees could be flattened.

In this paper we refute this conjecture, showing an 11-edge locked nearly orthogonal tree that can be transformed into a 21-edge locked orthogonal tree. We also extend the machinery of strongly locked self-touching linkages of \cite{cdr2} and \cite{adg}, to allow for self-touching linkages with zero-length edges, which might be useful in future analysis.

\begin{figure}
  \centering
  \includegraphics[scale=0.75]{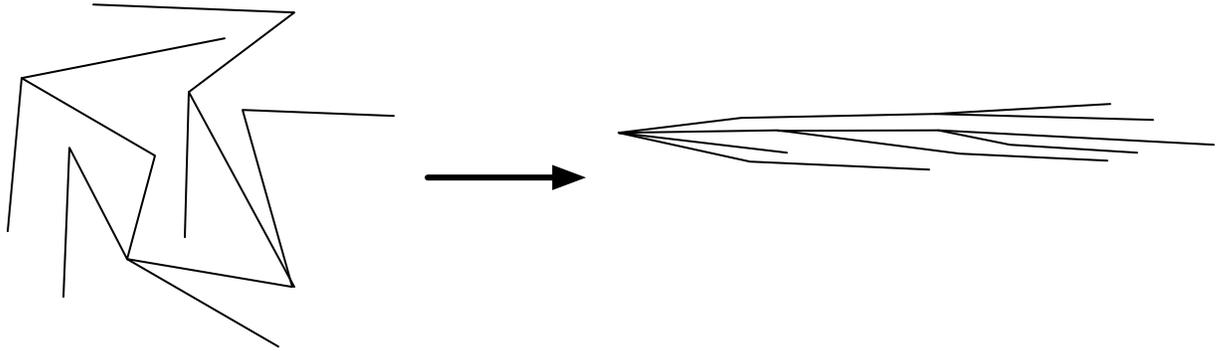}
  \caption{\label{fig:Flatten}
    Flattening a linkage: the initial tree (left) can be continuously transformed into the ``flat'' tree on the right, with all edges trailing off in the same direction from the root. Though shown distinct for clarity, all edges in the right figure lie along a single geometric line.}
\end{figure}

\section{Terminology}

A \term{(planar) linkage} is a simple graph together with an assignment of a nonnegative real length to each edge and a combinatorial planar embedding (clockwise order of edges around each vertex and which edges form the outer face). A \term{configuration} of a linkage is a (possibly self-intersecting) straight-line drawing of that graph in the plane, respecting the linkage's combinatorial embedding, such that the Euclidean distance between adjacent nodes equals the length assigned to their shared edge.

We are primarily interested in the standard case of \term{nontouching} configurations, that is, configurations in which no edges intersect each other except at a shared vertex. The set of all such configurations is called the \term{configuration space} of the linkage. A \term{motion} of a nontouching configuration $C$ is a continuous function $m$ from $[0,1]$ to the configuration space of the linkage such that $m(0) = C$.

To analyze nontouching configurations it is helpful to also consider \term{self-touching} configurations, where we relax our constraints to allow intersections, as long as edges do not actually cross each other. This substantially complicates the definitions, since edges can share the same geometric location, and we need a way to distinguish which ``side'' each edge is on in this case. We also need to be more careful in generalizing the definition of a motion, since two geometrically identical configurations may have different sets of valid motions depending on the combinatorial ordering of the edges. A full discussion of these details is beyond our scope, and we will simply rely on the formalization and results of \cite{cdr2}, \cite{cddf} and \cite{adg}. When we refer to a self-touching configuration or motion satisfying ``noncrossing constraints,'' it should usually be clear what is intended, but for the correct formalization behind that intuition see the references. The reader who wishes for a more formal intuition can think of a self-touching configuration as a convergent sequence of nontouching configurations.

A configuration $C'$ is a \term{$\delta$-perturbation} of a configuration $C$ if the position of each vertex in $C'$ differs (in Euclidean distance) from its position in $C$ by at most $\delta$. In particular, we allow the edge lengths in $C'$ to differ from those in $C$, thus altering the underlying linkage.

A configuration of a tree linkage can be \term{flattened} if it has a motion transforming it as in Figure~\ref{fig:Flatten} so that all edges are trailing off in the same direction from an arbitrarily chosen root node. Otherwise, it is \term{unflattenable}. (Which node is chosen as the root does not affect the definition; see \cite{biedl}.) A linkage is \term{locked} if its configuration space is topologically disconnected. For tree linkages, which are the only type we consider, it is equivalent to say a linkage is locked if it has an unflattenable configuration. A self-touching configuration is \term{rigid} if it has no nontrivial nonrigid motion. A configuration is \term{locked within $\varepsilon$} if no motion can change the position of any vertex by more than $\varepsilon$ (modulo equivalence by rigid motions).

A configuration $C$ is \term{strongly locked} if for any $\varepsilon > 0$ there exists a $\delta > 0$ such that any $\delta$-perturbation of $C$ that satisfies $C$'s noncrossing constraints is locked within $\varepsilon$. This definition trivially implies unflattenability, and thus also that the underlying linkage is locked. Note that to be strongly locked, $C$ must by definition be rigid and therefore self-touching. \cite{cdr2} shows that the necessary property of rigidity is also sufficient (see Theorem~\ref{thm:StronglyLocked} below).

\section{Nearly Orthogonal Tree}

\begin{figure}
  \centering
  \subfloat[][]{
    \includegraphics[scale=.85]{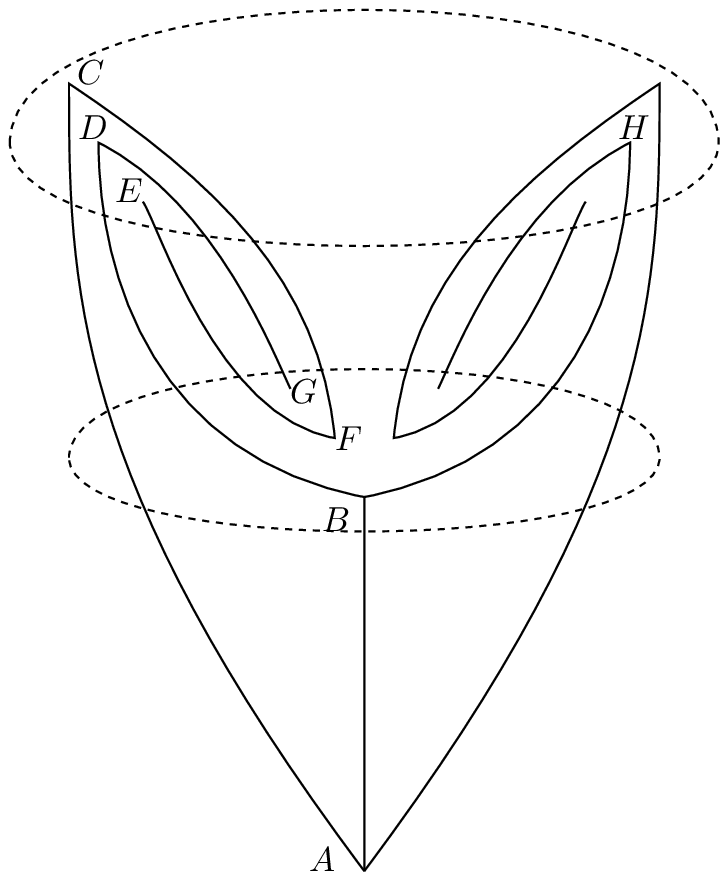}
    \label{fig:LockedTree}}\hfil
  \subfloat[][]{
    \includegraphics[scale=.85]{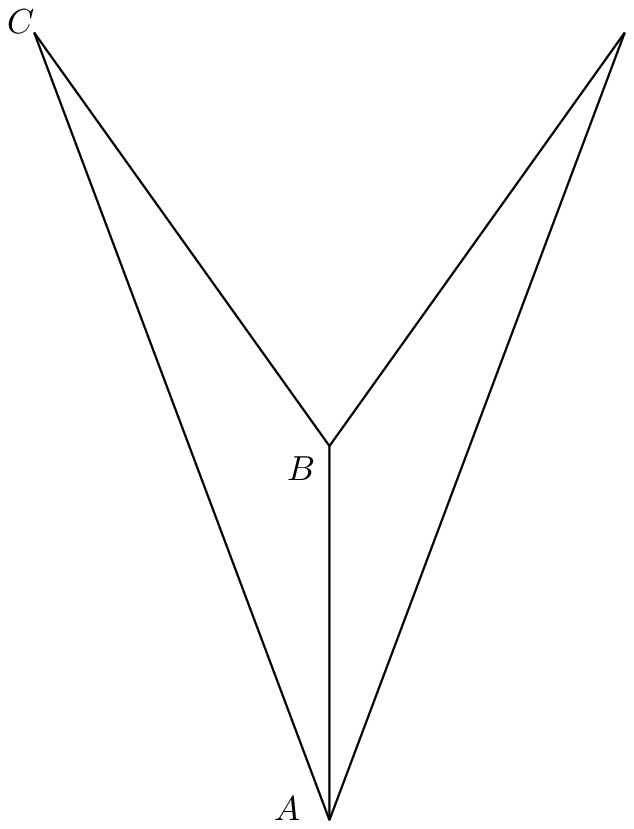}
    \label{fig:SimplifiedTree}}\hfil
  \caption{Left: A ``pulled apart'' view of our locked self-touching tree. Dotted lines surround vertices that are at the same geometric location. Right: The simplified version of the tree after applying rules from \cite{cddf}.}
\end{figure}

Consider the self-touching tree in Figure~\ref{fig:LockedTree}. The geometry of this tree is a straight vertical line with only three distinct vertices at the top, center and bottom, but it is shown ``pulled apart'' to ease exposition. We claim this tree is locked. We use two lemmas from \cite{cddf}:

\begin{lemma}[Rule 1 from \cite{cddf}] \label{lem:Rule1}
If a bar $b$ is collocated with another bar $b'$ of 
equal length, and the bars incident to $b'$ form angles less 
than $90^\circ$ on the same side as $b$, then any motion must keep 
$b$ collocated with $b'$ for some positive time.
\end{lemma}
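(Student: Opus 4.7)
The plan is to show directly that for small $t>0$, any attempt to displace $b$ off $b'$ forces a violation of either the length constraint $|b|=|b'|=L$ or a noncrossing constraint with a bar incident to $b'$. The argument is local and quantitative, comparing linear and quadratic small quantities.

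First I would pass to the moving reference frame in which $b'$ lies along the positive $x$-axis with endpoints $v_1=(0,0)$, $v_2=(L,0)$, oriented so that the $b$ side is the upper half-plane. Let $e_1,e_2$ be the bars incident to $b'$ at $v_1,v_2$ on the $b$ side, meeting $b'$ at angles $\alpha_1,\alpha_2<90^\circ$ at time $0$. By continuity of the motion, for all sufficiently small $t>0$ the corresponding angles $\alpha_i(t)$ remain strictly less than $90^\circ$, so each $\cot\alpha_i(t)$ stays bounded below by a fixed positive constant.

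Next I assume for contradiction that some motion makes $b$ non-collocated with $b'$ at arbitrarily small $t>0$. Writing the endpoints of $b$ in the moving frame as $u_1=(x_1,y_1)$ and $u_2=(x_2,y_2)$, close to $(0,0)$ and $(L,0)$ respectively, the $b$ side being upward forces $y_1,y_2\ge 0$, and the noncrossing constraint between $u_i$ and $e_i$ traps $u_i$ in the wedge at $v_i$, yielding $y_1\le x_1\tan\alpha_1(t)$ and $y_2\le (L-x_2)\tan\alpha_2(t)$. By continuity $x_1<x_2$ for small $t$. Setting $S=y_1\cot\alpha_1(t)+y_2\cot\alpha_2(t)\ge 0$, the wedge bounds give $0\le x_2-x_1\le L-S$; combining this with $(x_2-x_1)^2+(y_2-y_1)^2=L^2$ and squaring yields
\[
2LS \;\le\; (y_2-y_1)^2 + S^2.
\]
The right-hand side is quadratic in $(y_1,y_2)$, while the left-hand side is linear with positive coefficients bounded away from zero; so for $(y_1,y_2)$ sufficiently close to $(0,0)$ this inequality can only hold if $y_1=y_2=0$. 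The wedge and length constraints then reduce to $x_1=0,x_2=L$, contradicting non-collocation. Hence $b$ must remain collocated with $b'$ throughout some positive time interval.

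The main obstacle will be making the wedge-trapping step rigorous, since $u_i$ and $v_i$ are distinct combinatorial vertices that merely share a geometric location in the self-touching configuration. The assertion that noncrossing with $e_i$ locally confines $u_i$ to the wedge between $b'$ and $e_i$ must be read off the combinatorial embedding, using the self-touching formalism of \cite{cdr2} and \cite{adg}. The cleanest way to handle this is via the convergent-sequence interpretation: the wedge constraint is a closed condition satisfied by every sufficiently close nontouching perturbation of the configuration, so it passes to the self-touching limit.
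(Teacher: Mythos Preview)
The paper does not prove this lemma at all: it is quoted verbatim as ``Rule~1 from \cite{cddf}'' and used as a black box in the rigidity argument for Figure~\ref{fig:LockedTree}. There is therefore no proof in the paper to compare against.

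That said, your argument is sound and is essentially the standard one. The core inequality $2LS \le (y_2-y_1)^2 + S^2$, with $S=y_1\cot\alpha_1(t)+y_2\cot\alpha_2(t)$ linear in the $y_i$ with coefficients bounded away from zero and the right side purely quadratic, correctly forces $y_1=y_2=0$ for small displacements; the length constraint together with $x_1\ge 0$, $x_2\le L$ then pins $u_1,u_2$ to $v_1,v_2$. One small point: you state ``the $b$ side being upward forces $y_1,y_2\ge 0$'' as though it were a separate constraint, but in fact it already follows from the wedge condition $0\le y_i$, $y_i\le (\cdot)\tan\alpha_i$, so you need not invoke it independently.

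Your closing paragraph correctly identifies the one genuinely delicate step: that the combinatorial embedding places $u_i$ in the acute wedge between $b'$ and $e_i$, rather than in the complementary reflex region. This is exactly where the self-touching formalism is required, and your limit-of-nontouching-configurations justification is the right way to handle it. With that step granted, the proof is complete.
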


\begin{lemma}[Rule 2 from \cite{cddf}] \label{lem:Rule2}
If a bar $b$ is collocated with an incident bar $b'$ of 
the same length whose other incident bar $b''$ forms a convex 
angle with $b'$ surrounding $b$, then any motion must keep $b$ 
collocated with $b'$ for some positive time.
\end{lemma}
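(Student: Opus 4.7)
The plan is to prove Rule~2 by contradiction, showing that any motion separating $b$ from $b'$ forces an immediate crossing with $b''$.

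Setup: Let $v$ be the shared endpoint of $b$ and $b'$, and let $u$ be the collocated position of their other endpoints. Place $v$ at the origin and $b = b'$ along the positive $x$-axis to $u = (L,0)$, with $L = |b| = |b'|$. The bar $b''$ extends from $u$ so that together with $b'$ it bounds a convex wedge of interior angle $\phi < \pi$ on the side of $b'$ where $b$ is combinatorially placed, taken to be the upper half-plane. Parameterize any candidate motion by $\theta_b(t)$ and $\theta_{b'}(t)$, the angles of $b$ and $b'$ at $v$, with both starting at $0$; collocation is equivalent to $\theta_b \equiv \theta_{b'}$. Since rotating $b$ to the lower side of $b'$ would immediately cross $b'$ and violate the combinatorial embedding, any separation must satisfy $\theta_b(t) > \theta_{b'}(t)$.

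Main step: I would show that $\theta_b(t) > \theta_{b'}(t)$ with both small forces the segments $b$ and $b''$ to intersect in their interiors. Working in the frame comoving with $b'$, this reduces to a question about a single rotation of $b$ into the wedge. The key geometric fact is that the circle of radius $L$ centered at $v$ is tangent to $b'$ at $u$, so the far endpoint of $b$ leaves $u$ perpendicular to $b'$ and into the wedge. By the combinatorial embedding, this endpoint is placed on the $b''$ side of the wedge, so the segment from $v$ to the endpoint must sweep across $b''$; a law-of-sines calculation in the triangle formed by $v$, $u$, and the candidate intersection point confirms that the crossing is transversal and interior for any strictly positive separation, contradicting the noncrossing constraint.

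The main obstacle will be handling the self-touching formalism rigorously: coincident edges still carry distinguishable combinatorial data, and I need to confirm that motions preserving this data must also preserve collocation of $b$ and $b'$ for some initial interval. The cleanest route is to reduce to nearby nontouching configurations using the equivalence cited in \cite{cdr2} and \cite{adg}, where the embedding-based obstruction becomes a standard planar crossing obstruction, and then transfer the result back to the self-touching configuration by continuity. A secondary subtlety is that $b'$ may rotate with $b$ and drag $b''$ rigidly along, which I sidestep by using the relative angle $\theta_b - \theta_{b'}$ throughout the geometric calculation.
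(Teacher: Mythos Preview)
The paper does not prove this lemma at all: it is quoted from \cite{cddf} as ``Rule~2'' and used as a black box in the rigidity argument for Figure~\ref{fig:LockedTree}. There is therefore no proof in this paper to compare your proposal against; the argument you are looking for lives in \cite{cddf}.

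On the substance of your sketch: the overall picture (the free endpoint of $b$ moves on the circle of radius $L$ about $v$, tangent to $b'$ at $u$, so any separation pushes it perpendicularly off $b'$) is the right one, but two steps need tightening. First, ``rotating $b$ to the lower side of $b'$ would immediately cross $b'$'' is not literally correct, since $b$ and $b'$ share the vertex $v$ and hence never cross geometrically; what actually rules out that direction is the combinatorial side data carried by the self-touching configuration, and you should say so explicitly. Second, and more seriously, your crossing claim with $b''$ does not go through for obtuse wedge angles. Working in the comoving frame with $b'$ along the positive $x$-axis and $b''$ leaving $u$ at angle $\pi-\phi$, the law-of-sines computation you invoke gives the intersection parameter along $b$ as $s=\sin\phi/\sin(\phi+\theta)$; for $\phi>\pi/2$ and small $\theta>0$ this exceeds $1$, so the putative crossing lies beyond the endpoint of $b$ and there is no obstruction. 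Equivalently, the tangent direction $\pi/2$ already lies inside the wedge when $\phi>\pi/2$, so the endpoint can enter without meeting $b''$. Since the companion Rule~1 in \cite{cddf} explicitly demands angles below $90^\circ$, you should check the precise hypothesis there before relying on your argument for all convex $\phi$; in the paper's own application all the relevant angles are near $0^\circ$, so this subtlety does not arise.
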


Using these lemmas, we can quickly show:

\begin{theorem}
The self-touching linkage of Figure~\ref{fig:LockedTree} is rigid.
\end{theorem}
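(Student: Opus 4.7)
The plan is to show that every bar of the linkage is pinned in place by iteratively applying Lemmas~\ref{lem:Rule1} and~\ref{lem:Rule2}. Since the entire configuration collapses onto a single vertical segment with only three distinct vertex positions, every edge shares its geometric location with at least one other edge, and the task reduces to pairing each bar with a ``witness'' partner whose incident bars satisfy the angle hypothesis of one of the two rules.

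First I would catalogue the collocation data from Figure~\ref{fig:LockedTree}: for each bar $b$, I record the other bars at the same geometric location, the incident bars at each endpoint (with their lengths), and the combinatorial side on which each collocated copy lies in the pulled-apart embedding. Then I would look for an ``outermost'' seed bar whose locking depends on nothing else --- typically a leaf-adjacent bar whose equal-length collocated partner has two incident bars forming an obviously convex angle surrounding it, making Lemma~\ref{lem:Rule2} apply directly. With that seed pinned for some positive time, the positions of its endpoints become constrained, which in turn sharpens the angle configuration visible to adjacent bars. I would propagate inward from the seed, at each step applying Lemma~\ref{lem:Rule1} when the collocated partner is non-incident and its neighbors form sub-$90^\circ$ angles on the side of $b$, and Lemma~\ref{lem:Rule2} when the partner is incident and a further incident bar forms a convex angle surrounding $b$.

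Once every bar is shown to be forced to remain collocated with its witness for some positive time, no bar can undergo any nontrivial motion in any positive time interval, which is precisely the statement of rigidity.

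The main obstacle is the combinatorial bookkeeping rather than any deep geometric argument: Lemma~\ref{lem:Rule1} demands strict inequality ($<90^\circ$) on the correct side, and Lemma~\ref{lem:Rule2} demands that $b''$ surround $b$ on the correct side, so each application requires careful verification of the cyclic edge order at the relevant vertex and of which ``side'' the collocated copy of $b$ lies on in the self-touching embedding. A secondary concern is avoiding circular dependence among the rule applications: I must exhibit a topological ordering of the bars such that at each step the hypotheses for the next bar are already established, and the natural outside-in order dictated by the pulled-apart figure should provide this without trouble.
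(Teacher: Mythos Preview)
Your plan is exactly the paper's approach: it applies Rule~1 to pin $DG$ with $CF$ (enclosed by $EF$ and $CA$), Rule~2 to pin $CF$ with $EF$ (enclosed by $CA$), Rule~1 again to pin $EF$ with $DB$ (enclosed by $DG$ and $BH$), and then invokes symmetry for the other half. The only refinement worth noting is the concluding step: rather than arguing ``every bar has a collocated witness, hence no bar moves,'' the paper observes that these identifications collapse the linkage to the trivially rigid tree of Figure~\ref{fig:SimplifiedTree}, which is the cleaner way to finish since pairwise collocation alone does not automatically yield global rigidity without tracking that the collocations chain into a single rigid piece.
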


\begin{proof}
Applying Lemma~\ref{lem:Rule1} to edges $DG$ and $CF$, we see they must be collocated for positive time because of the enclosing edges $EF$ and $CA$. Furthermore, by Lemma~\ref{lem:Rule2}, $CF$ is also collocated with the adjacent edge $EF$ for positive time because of the enclosing edge $CA$. Finally, by Lemma~\ref{lem:Rule1}, $EF$ is collocated with $DB$ for positive time because of enclosing edges $DG$ and $BH$. Thus, all four edges mentioned are collocated for positive time, and since the figure is symmetric an identical analysis applies on the opposite side of the figure.

The preceding implies that for positive time, our figure reduces to the trivially rigid linkage of Figure~\ref{fig:SimplifiedTree}, so we are done.
\end{proof}

One more earlier result will give us what we need:

\begin{theorem}[Theorem 8.1 from \cite{cdr2}] \label{thm:StronglyLocked}
Any rigid self-touching configuration is strongly locked.
\end{theorem}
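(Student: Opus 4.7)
The plan is a compactness/contradiction argument. Assume $C$ is rigid but not strongly locked. Then there is some $\varepsilon > 0$ and a sequence of perturbations $C_n$ of $C$ with $\delta_n \to 0$, each satisfying $C$'s noncrossing constraints, such that each $C_n$ admits a motion $m_n$ carrying some vertex by more than $\varepsilon$. By continuity in $t$, I would reparameterize and truncate so that $C'_n := m_n(1)$ lies at Euclidean distance exactly $\varepsilon$ from $C_n$ modulo rigid motions, and so that the whole image of $m_n$ stays within a bounded region of the plane.

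The next step is to extract a limit. After fixing a rigid-motion gauge, the configurations $\{m_n(t)\}$ all live in a compact subset of Euclidean configuration space, so by reparameterizing each $m_n$ by arclength and applying an equicontinuity argument, a subsequence $m_{n_k}$ converges uniformly to a continuous path $m \colon [0,1] \to $ (configurations of $C$'s linkage) with $m(0) = C$ and $\|m(1) - C\| = \varepsilon$. Each intermediate $m(t)$ is the limit of configurations of a linkage approximating that of $C$, so $m(t)$ automatically satisfies the edge-length constraints of $C$ itself.

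The main obstacle, and the real content of the theorem, is to show that the limit path $m$ is a valid motion of $C$ with respect to $C$'s self-touching noncrossing structure. Edges that are merely close in each $C_n$ become collocated in $C$, and $C$'s combinatorial embedding prescribes which side each edge must lie on relative to the others. One must verify that no pair of edges crosses through another along $m$, even instantaneously, under this stronger combinatorial constraint --- it is easy for a naive limit to violate it. This is where the formalism of self-touching configurations from \cite{cdr2}, \cite{cddf} and \cite{adg} is indispensable: viewing a self-touching configuration as a convergent sequence of nontouching approximants together with its combinatorial side data, the uniform convergence $m_{n_k} \to m$ yields a self-touching motion in the correct sense, because the side data of $m(t)$ can be read off from the approximating nontouching configurations $m_{n_k}(t)$.

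Once $m$ is known to be a valid self-touching motion of $C$, it is a nontrivial motion: $m(1)$ lies at distance exactly $\varepsilon > 0$ from $C$ modulo rigid motions. This contradicts rigidity of $C$ and completes the proof. The part I expect to require the most care is the interface between the topological compactness step and the combinatorial side bookkeeping in the third paragraph, since it is precisely here that working with self-touching configurations differs from the purely geometric setting.
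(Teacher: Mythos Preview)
Your compactness-on-paths strategy is a natural idea, but it differs substantially from the argument the paper relies on (which it inherits from \cite{cdr2} and reuses in the proof of the zero-length-edge extension). That argument does \emph{not} try to build a limiting motion at all. Instead it exploits the fact that the self-touching configuration space $A(L)$ is semialgebraic: local connectedness of semialgebraic sets upgrades ``rigid'' (i.e.\ $\{C\}$ is its own path component) to ``isolated'' (i.e.\ $\{C\}$ has positive distance to $A(L)\setminus\{C\}$). With $C$ isolated, the purely topological Lemma~\ref{lem:LimitPoint} applied to the nested closed family $A_\delta=\overline{\text{Annot}_L(\text{NConf}_\delta(L))}$ with $\bigcap_{\delta>0}A_\delta=A_0=A(L)$ finishes the job via a pointwise (not pathwise) limit: any escaping path in $A_\delta$ must hit the sphere of some small radius around $C$, and a subsequential limit of those hit-points lands in $A_0$ near but distinct from $C$, contradicting isolation.

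Your route has a real gap precisely where the semialgebraic input is doing work in the other proof. After truncating so that $m_n(1)$ sits at distance exactly $\varepsilon$, nothing bounds the \emph{arclength} of $m_n$; reparametrizing by arclength gives $1$-Lipschitz maps on domains $[0,L_n]$ with $L_n$ possibly unbounded, so Arzel\`a--Ascoli does not give a uniformly convergent subsequence on $[0,1]$, and a diagonal limit on $[0,\infty)$ may well be the constant path $C$ even though each $m_n$ eventually escapes. Without further structural input (semialgebraicity, a curve-selection lemma, or an a priori bound on path length) you cannot rule this out, and hence cannot produce the nontrivial limiting motion you need. Your discussion of the side-data bookkeeping is in the right spirit---it is exactly what the annotated space $A(L)$ packages---but that part only becomes relevant once you actually have a limit path, which is where the argument currently stalls.
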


\begin{corollary}
The tree in Figure~\ref{fig:LockedTree} is strongly locked.
\end{corollary}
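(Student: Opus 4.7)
The plan is to obtain the corollary as an immediate consequence of the previous theorem together with Theorem~\ref{thm:StronglyLocked}. The preceding theorem already verifies the single hypothesis we need, namely that the self-touching configuration of Figure~\ref{fig:LockedTree} is rigid. Since Theorem~\ref{thm:StronglyLocked} states that rigidity of a self-touching configuration implies strong locking, the conclusion follows by one application of that theorem.

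Concretely, I would write essentially a one-line proof: ``By the previous theorem, the configuration of Figure~\ref{fig:LockedTree} is rigid; applying Theorem~\ref{thm:StronglyLocked} gives that it is strongly locked.'' There is nothing further to compute or construct, so the only real question is whether our configuration satisfies the hypotheses of Theorem~\ref{thm:StronglyLocked} as stated in \cite{cdr2}.

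The potential obstacle, and the one place I would pause, is confirming that the self-touching configuration falls within the framework for which Theorem 8.1 of \cite{cdr2} was originally proved. The configuration collapses eight combinatorial vertices into only three distinct geometric points, but every edge appearing in the tree connects two of those three distinct points, so all edges have strictly positive length. The combinatorial planar embedding and the noncrossing constraints implicit in the ``pulled apart'' picture match the self-touching formalism of \cite{cdr2}, so no extension is required here. (The extension to zero-length edges promised in the abstract is only needed later, when converting this nearly orthogonal example into a fully orthogonal one.) With that check done, the corollary is immediate, and no further argument is needed.
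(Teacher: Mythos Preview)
Your proposal is correct and matches the paper's approach exactly: the corollary is stated immediately after Theorem~\ref{thm:StronglyLocked} with no separate proof, since it follows at once from the rigidity established in the preceding theorem together with Theorem~\ref{thm:StronglyLocked}. Your additional check that all edges in Figure~\ref{fig:LockedTree} have positive length (so that the original \cite{cdr2} framework applies without the zero-length extension) is a sensible clarification, though the paper leaves it implicit.
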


Thus, by a small perturbation to obtain a nontouching tree, the figure yields a new example of (an unflattenable configuration of) a locked tree. However, such a perturbation will not yield an orthogonal tree. We address this issue in the next section.

\section{Fully Orthogonal Tree}

\begin{figure}
  \centering
  \includegraphics[scale=0.75]{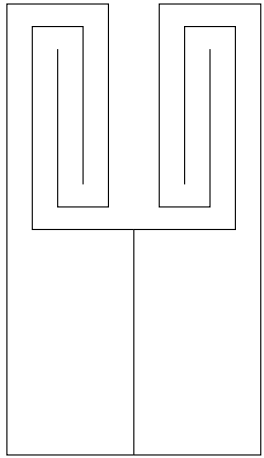}
  \caption{\label{fig:OrthoTree}
    The orthogonal version of the tree. This configuration is locked when the dimensions are chosen appropriately.}
\end{figure}

A simple modification of our tree makes it orthogonal: see Figure~\ref{fig:OrthoTree}. We will show that this diagram is in fact unflattenable (if the dimensions are chosen appropriately), but the proof is not immediate. The general idea is that this configuration can still be viewed as a ``small perturbation'' of the original tree from Figure~\ref{fig:LockedTree}, if we add zero-length edges to the original figure wherever Figure~\ref{fig:OrthoTree} has a horizontal edge. Then, as long as the horizontal edges are sufficiently short, they can still be viewed as a ``small perturbation'' of a zero-length edge.

Unfortunately, Theorem~\ref{thm:StronglyLocked} no longer works directly when the self-touching configuration has zero-length edges. To cover this situation we need the more recent work of Abbott, Demaine and Gassend \cite{adg} that extends the machinery of self-touching configurations in order to prove a generalized Carpenter's Rule Theorem. In this more general model, zero-length edges are permitted.

\begin{theorem}
Theorem~\ref{thm:StronglyLocked} still holds when zero-length edges are allowed.
\end{theorem}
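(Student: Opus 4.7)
The plan is to adapt the proof of Theorem~\ref{thm:StronglyLocked} from \cite{cdr2} to the extended framework of \cite{adg}, in which each zero-length edge carries combinatorial data recording the local angular order at its endpoints. The original proof deduces strong locking from rigidity via a compactness argument: if some sequence of shrinking perturbations admitted motions that each moved a vertex by more than $\varepsilon$, then a limit of these motions would yield a nontrivial motion of the rigid configuration, contradicting rigidity. My approach is to carry this argument through essentially verbatim, verifying at each step that the presence of zero-length edges does not break the relevant geometric estimates.

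The key observation is that in the \cite{adg} model, a $\delta$-perturbation can expand a zero-length edge into one of length up to $2\delta$, but the combinatorial embedding forces the expanded edge to lie in a prescribed angular sector at each endpoint; this is exactly what keeps the noncrossing constraints inherited from $C$ rigid enough to support the compactness argument. Concretely, given a sequence of $\delta_i$-perturbations $C_i$ (with $\delta_i \to 0$) each admitting a motion $m_i$ that moves some vertex by at least $\varepsilon$, I would reparameterize the $m_i$ so that the moved vertex travels at unit speed until reaching displacement $\varepsilon$, truncate at that moment, and apply Arzel\`a--Ascoli to extract a uniformly convergent subsequence. Using the \cite{adg} characterization of self-touching motions as limits of nontouching motions, the limit path is itself a valid motion of $C$ that displaces a vertex by $\varepsilon$, contradicting the assumed rigidity of $C$.

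The main obstacle is ensuring that the limit object is a valid \cite{adg}-motion of $C$, rather than some degenerate configuration-valued path. Care is needed because a zero-length edge of $C$ may open up during $m_i$ and then close back down, and the angular order at a degenerate vertex of $C_i$ may differ slightly from that specified for $C$. I would handle this with a diagonal argument: each $m_i$ is itself approximable by nontouching motions of a slightly perturbed linkage (this is part of the \cite{adg} formalization), so a suitably chosen diagonal sequence of nontouching motions converges to a self-touching motion of $C$ that respects the prescribed angular orders at every vertex, zero-length edges included. Once this framework-level step is in place, the remainder of the \cite{cdr2} argument is purely geometric—estimating how far a vertex can travel under an expansive motion in terms of the perturbation size—and transfers to the zero-length setting without modification.
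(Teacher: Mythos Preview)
Your approach is a reasonable alternative but differs substantially from the paper's. The paper does not redo any compactness argument on motions; instead it observes that the original proof in \cite{cdr2} is already packaged as an abstract topological lemma (Lemma~8.2 there) about a nested family of closed sets $A_\delta\subseteq\mathbb{R}^m$ with $\bigcap_{\delta>0}A_\delta=A_0$. The entire work is then to check that, with $A_\delta$ taken to be the closure of $\text{Annot}_L(\text{NConf}_\delta(L))$ in the \cite{adg} sense and $A_0=A(L)$, the hypothesis $\bigcap_{\delta>0}A_\delta=A_0$ still holds---and this is exactly Lemma~3 of \cite{adg}. Rigidity makes $C$ an isolated point of $A_0$, so $B_0=\{C\}$ trivially satisfies the remaining hypotheses, and the lemma fires. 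No Arzel\`a--Ascoli, no diagonal extraction, no analysis of how zero-length edges open and close.

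Your route---extract a limiting motion from a sequence of perturbed motions---can be made to work, but two points deserve care. First, reparameterizing so that \emph{one} vertex moves at unit speed does not give equicontinuity of the full configuration path; you want arc-length parameterization in the ambient $\mathbb{R}^{2n}$ so the paths are $1$-Lipschitz. Second, your closing remark that ``the remainder of the \cite{cdr2} argument is purely geometric---estimating how far a vertex can travel under an expansive motion'' mischaracterizes that proof: expansiveness plays no role in Theorem~8.1 of \cite{cdr2}, which is entirely the topological lemma above. The advantage of the paper's approach is modularity: once \cite{adg} supplies the right annotated configuration space and the intersection identity, nothing in the locking argument needs to be reopened. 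Your approach is more self-contained but reproves machinery that is already available.
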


In the proof we will explicitly use the notation and results of \cite{cdr2} and \cite{adg}. In lieu of providing complete coverage of the requisite background material, we also give a brief sketch of the arguments involved.

\begin{proof} (Sketch)
The original proof from \cite{cdr2} uses a topological argument. The idea is that since a rigid configuration is an isolated point in its (self-touching) configuration space, and using the fact that this configuration space is locally defined by semialgebraic constraints in its vertex coordinates, a small perturbation in those coordinates can yield only a small expansion to the connected component.

The generalized definitions we need from \cite{adg} still define the configuration space by semialgebraic constraints, so a similar argument still holds, and we need only show that we can substitute the new definitions in the topological arguments of \cite{cdr2}.
%
\end{proof}

\begin{proof} (Full)
\begin{notation}
We reuse the terminology and notation of \cite{cdr2}, which we briefly review here: for a linkage $L$, we use $\text{NConf}_\delta(L)$ to denote the space of all non-self-touching configurations of all linkages $\delta$-related to $L$ (that is, roughly, all linkages whose graphs closely approximate that of $L$). An \term{annotated} configuration of $L$ is a configuration of $L$ together with, for each pair of edges, an extra (real) value indicating which ``side'' of each other they are on. These values are discontinuous at points where edges overlap for positive length, but are continuous for all non-self-touching configurations, and thus can be used to define the combinatorial structure of a self-touching configuration as a limit of non-self-touching annotated configurations. The space of annotated non-self-touching configurations of all linkages $\delta$-related to $L$ is denoted $\text{Annot}_L(\text{NConf}_\delta(L))$. The set of all limit points of sequences within $\text{Annot}_L(\text{NConf}_1(L))$ is denoted $A(L)$. $A(L)$ is thus the real space within which self-touching linkages can move, incorporating both geometric data and combinatorial ordering of edges.
\end{notation}

We use the following topological lemma, which is the central result used in the original proof of Theorem~\ref{thm:StronglyLocked}:

\begin{lemma}[Lemma~8.2 from \cite{cdr2}] \label{lem:LimitPoint}
Let $A_\delta \subseteq \mathbb{R}^m$ ($\delta \geq 0$) be a family of closed sets with $A_\delta \subseteq A_{\delta '}$ for $0 \leq \delta < \delta '$ and
\begin{equation*}
\bigcap_{\delta > 0} A_\delta = A_0 
\end{equation*}
For $p \in A_\delta$ we denote by $B_\delta(p)$ the set of points which are reachable by a curve in $A_\delta$ starting at $p$. Let $p^* \in A_0$, suppose that the set $B_0 := B_0(p)$ is compact, and there is a positive lower bound on the distance between $B_0$ and any point in $A_0 - B_0$.

Then for every $\varepsilon > 0$ there is a $\delta > 0$ with the following property: $\norm{p - p^*} < \delta$ implies that $B_\delta(p)$ is contained in an $\varepsilon$ neighborhood of $B_0$.
\end{lemma}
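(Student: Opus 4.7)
The plan is to argue by contradiction using compactness of $B_0$ together with a ``first exit'' construction along connecting paths. Suppose the conclusion fails for some $\varepsilon > 0$; without loss of generality I may further assume $\varepsilon < \eta$, where $\eta > 0$ is the hypothesized lower bound on the distance from $B_0$ to $A_0 \setminus B_0$, since the statement for a smaller $\varepsilon$ directly implies it for every larger one (the same $\delta$ suffices). Negating the conclusion yields, for each $n$, a point $p_n \in A_{1/n}$ with $\norm{p_n - p^*} < 1/n$ and a point $q_n \in B_{1/n}(p_n)$ with $q_n \notin N_\varepsilon(B_0)$, together with a connecting curve $\gamma_n : [0,1] \to A_{1/n}$ from $p_n$ to $q_n$.

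Next I would define the first-exit point. Since $p^* \in B_0$ and $p_n \to p^*$, for all sufficiently large $n$ the starting point $\gamma_n(0) = p_n$ satisfies $\mathrm{dist}(p_n,B_0) < \varepsilon$, while $\mathrm{dist}(q_n,B_0) \geq \varepsilon$. Continuity of $t \mapsto \mathrm{dist}(\gamma_n(t), B_0)$ gives a smallest $t_n \in [0,1]$ with $\mathrm{dist}(\gamma_n(t_n), B_0) = \varepsilon$; set $r_n := \gamma_n(t_n)$. The level set $K := \{x \in \mathbb{R}^m : \mathrm{dist}(x, B_0) = \varepsilon\}$ is closed, and bounded because $B_0$ is compact, hence compact. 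Therefore some subsequence $r_{n_k}$ converges to a limit $r^* \in K$, and in particular $\mathrm{dist}(r^*, B_0) = \varepsilon > 0$, so $r^* \notin B_0$.

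The crucial step is to show $r^* \in A_0$. For any fixed $j$ and every $k \geq j$, the hypothesis $A_{\delta} \subseteq A_{\delta'}$ for $\delta < \delta'$ gives $r_{n_k} \in A_{1/n_k} \subseteq A_{1/n_j}$; since $A_{1/n_j}$ is closed, passing $k \to \infty$ yields $r^* \in A_{1/n_j}$. Because this holds for every $j$, and $\{1/n_j\}$ is cofinal in $(0,\infty)$ as it tends to $0$, the hypothesis $\bigcap_{\delta>0} A_\delta = A_0$ gives $r^* \in A_0$. Then $r^* \in A_0 \setminus B_0$, so $\mathrm{dist}(r^*, B_0) \geq \eta > \varepsilon$, contradicting $\mathrm{dist}(r^*, B_0) = \varepsilon$.

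The main obstacle is setting up the first-exit construction cleanly and showing that the limit point $r^*$ lands not merely in some $A_{\delta}$ but in $A_0$ itself; this is exactly where the monotonicity of the family, the closedness of each $A_\delta$, and the intersection hypothesis conspire. Once $r^* \in A_0$ is established, the positive separation between $B_0$ and the rest of $A_0$ is precisely the ingredient that closes the contradiction.
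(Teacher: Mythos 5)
The paper does not actually supply a proof of this lemma: it is imported verbatim as Lemma~8.2 of \cite{cdr2} and used as a black box, and the surrounding text of the paper is devoted to verifying the lemma's hypotheses in the modified configuration space of \cite{adg}, not to re-proving the lemma itself. So there is no in-paper proof to compare you against.

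On its own merits, your argument is correct and self-contained. The reduction to $\varepsilon < \eta$ is legitimate since a smaller $\varepsilon$ with its witnessing $\delta$ implies the statement for any larger one. The first-exit construction is well founded: $p_n \to p^* \in B_0$ forces $\mathrm{dist}(p_n,B_0) < \varepsilon$ for large $n$, while $\mathrm{dist}(q_n,B_0) \geq \varepsilon$, so continuity of $t \mapsto \mathrm{dist}(\gamma_n(t),B_0)$ yields a first time $t_n$ at which the path meets the level set $K = \{x : \mathrm{dist}(x,B_0)=\varepsilon\}$, and $K$ is compact because $B_0$ is. Extracting $r_{n_k}\to r^* \in K$ and then using monotonicity of the family, closedness of each $A_\delta$, and $\bigcap_{\delta>0}A_\delta = A_0$ to conclude $r^*\in A_0$ is exactly the right maneuver: for any $\delta>0$ choose $j$ with $1/n_j<\delta$, note the tail of the subsequence lies in $A_{1/n_j}\subseteq A_\delta$, and close under limits. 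Then $r^*\in A_0\setminus B_0$ at distance exactly $\varepsilon<\eta$ from $B_0$ contradicts the separation hypothesis. This is the standard compactness/first-exit argument for a limit statement of this form and is in the same spirit as the original proof in \cite{cdr2}.
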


What we need is to show that this lemma still applies in the context of the configuration space as defined by \cite{adg}.

Let $L$ be a linkage, and $C$ a rigid self-touching configuration of $L$. Fix the source and direction of an arbitrary (positive-length) edge of $L$, to disallow rigid motions. We have then that $C$ is an isolated point in $A(L)$. 

Now, for the lemma, set $A_0 = A(L)$ and $A_\delta = \overline{\text{Annot}_L(\text{NConf}_\delta(L))}$. It is trivial that $A_\delta \subseteq A_{\delta '}$ for $\delta < \delta '$. We also have by Lemma~3 of \cite{adg} that $A_0 = \cap_{\delta > 0} A_\delta$, as required. Set $p^* = C$. Since $C$ is an isolated point, $B_0 = \{C\}$ is compact and has positive distance to $A_0 - B_0$. The preconditions for Lemma~\ref{lem:LimitPoint} are satisfied, and Theorem~\ref{thm:StronglyLocked}, which relies on the lemma, now also follows for the more general configuration space of \cite{adg}.
%
%
%
%
\end{proof}

\begin{corollary}
The tree in Figure~\ref{fig:LockedTree} is still strongly locked when we add zero-length edges at each vertex. In particular, the orthogonal tree in Figure~\ref{fig:OrthoTree} is locked when the vertical gaps and horizontal edge lengths between vertices at the top, center and bottom are chosen to be sufficiently small.
\end{corollary}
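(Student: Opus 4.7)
The plan is to first construct an augmented version of the tree in Figure~\ref{fig:LockedTree} by inserting a zero-length edge at each vertex where Figure~\ref{fig:OrthoTree} has a horizontal edge, adopting the combinatorial planar embedding that Figure~\ref{fig:OrthoTree} induces. I would then verify that this augmented self-touching configuration is rigid, invoke the extended Theorem~\ref{thm:StronglyLocked} (now valid for zero-length edges) to conclude that it is strongly locked, and finally interpret Figure~\ref{fig:OrthoTree} as a sufficiently small $\delta$-perturbation of the augmented configuration so as to transfer lockedness to it.

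For the rigidity step, I would re-run the collocation argument already used to establish rigidity of Figure~\ref{fig:LockedTree}. The chain of applications of Lemmas~\ref{lem:Rule1} and \ref{lem:Rule2} only involves the positive-length edges $DG$, $CF$, $EF$, $DB$ and their mirror images, so it carries over verbatim in the augmented linkage. The newly inserted zero-length edges contribute only identifications of pairs of vertices at a single geometric point; each such edge is automatically pinned for positive time because both of its endpoints are pinned by the surrounding bar structure, and it therefore adds no degree of freedom. Hence for positive time the augmented configuration again collapses to the trivially rigid simplified linkage of Figure~\ref{fig:SimplifiedTree} with some edges of zero length, which remains rigid.

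Having rigidity, I would apply the extended Theorem~\ref{thm:StronglyLocked} inside the generalized configuration space $A(L)$ of \cite{adg} to conclude that the augmented configuration is strongly locked. By definition, for every $\varepsilon > 0$ there is a $\delta > 0$ such that any $\delta$-perturbation respecting the noncrossing constraints is locked within $\varepsilon$. I would then observe that if the vertical gaps and horizontal edge lengths in Figure~\ref{fig:OrthoTree} are all chosen smaller than this $\delta$, the orthogonal tree is exactly such a $\delta$-perturbation: each originally positive-length edge is moved by at most $\delta$, and each originally zero-length edge becomes a horizontal bar of length at most $\delta$. Choosing $\varepsilon$ smaller than the displacement needed to flatten the tree (for instance, less than the length of the shortest non-horizontal edge), we conclude that the orthogonal tree is unflattenable and therefore its underlying linkage is locked.

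The step I expect to need the most care is the perturbation step at the end: one must check that Figure~\ref{fig:OrthoTree} truly lies in $A_\delta = \overline{\text{Annot}_L(\text{NConf}_\delta(L))}$ for the augmented linkage $L$, i.e., that the planar embedding and the side-annotations on pairs of edges match those inherited from the self-touching limit in Figure~\ref{fig:LockedTree}. This amounts to reading off the cyclic edge orders at the (formerly collocated) vertices and checking they agree with the ``pulled apart'' view together with the orientations of the horizontal spacers, which is a routine but case-by-case check rather than a conceptual difficulty.
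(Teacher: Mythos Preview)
Your proposal is correct and matches the paper's intended argument. The paper states this corollary without proof, relying on the preceding discussion (that Figure~\ref{fig:OrthoTree} is a $\delta$-perturbation of Figure~\ref{fig:LockedTree} augmented with zero-length edges) together with the just-proved extension of Theorem~\ref{thm:StronglyLocked}; your write-up simply spells out those steps, including the observation that rigidity survives the insertion of zero-length edges and the final check on annotations, which the paper leaves implicit.
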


\section{Conclusion}

We generalized the results of \cite{cdr2} to include zero-length edges. We used this result to prove an example of an unflattenable orthogonal tree, refuting a conjecture of Poon \cite{poon}. However, this leaves open another conjecture from \cite{poon} that any non-self-touching unit-length tree can be flattened.

Figure~\ref{fig:LockedTree} is the smallest known example of a locked tree, using only 11 edges. It is an interesting open question whether this is minimal, that is, whether any tree using fewer than 11 edges can lock.



\bibliography{locked_linkage}
\bibliographystyle{alpha}


%
%
%
%
%
%
%

\end{document}